\newtheorem{thm}{Theorem}
\theoremstyle{remark}
\newtheorem{rmk}{Remark}
\newbox\JIGAMMa
\newbox\JIGAMMb
\newbox\JIGAMMc
\newbox\JIGAMMd
\newbox\JIGAMMz
\newbox\LLbox
\newbox\LLboxh
\newbox\SLhilfbox
\newbox\SLubox
\newbox\SLobox
\newbox\SLergebnis
\newbox\TENbox
\newif\ifSLoben
\newif\ifSLunten
\newdimen\JIGAMMdimen
\newdimen\JIhsize\relax\JIhsize=\hsize
\newdimen\SLrandausgleich
\newdimen\SLhoehe
\newdimen\SLeffbreite
\newdimen\SLuvorschub
\newdimen\SLmvorschub
\newdimen\SLovorschub
\newdimen\SLsp
\def\pkt{\cdot}
\def\ppkt{\mathbin{\mathord{\cdot}\mathord{\cdot}}}
\ht\JIGAMMa{\vskip-.125pt
                             \hrule width1.2\ht\JIGAMMa height.25pt
                             \vfill
                             \hrule width1.2\ht\JIGAMMa height.25pt
                             \vskip-.125pt}%
\def\Oldroy#1#2#3{\STAPEL{#1}!_\SLstrich!_\SLstrich!^\circ{}
                  \ifx #2,{}_{\copy\JIGAMMz}%
                  \else \mskip1mu{}^{\copy\JIGAMMz}\fi
                  \mskip1mu\ifx #3,{}_{\copy\JIGAMMz}%
                           \else {}^{\copy\JIGAMMz}\fi }
\def\OP#1#2{\ifnum#1=1{\rm S}
            \else\ifnum#1=2{\rm S}^\JIv
                 \else\ifnum#1=3{\rm S}^T
                      \else{{\rm S}^T}^\JIv
            \fi\fi\fi\LL{{#2}}\RR}
\edef\JIminus{{\setbox\JIGAMMa=\hbox{$\scriptstyle x$}%
           \hbox{\hskip .10\wd\JIGAMMa
                 \vbox{\hrule width .6\wd\JIGAMMa height .07\wd\JIGAMMa
                       \vskip.53\ht\JIGAMMa}%
                 \hskip .10\wd\JIGAMMa}}}
\edef\JIv{{\JIminus 1}}
\def\LL#1\RR{\setbox\LLbox =\hbox{\mathsurround=0pt$\displaystyle
                                              \left(#1\right)$}%
       \setbox\LLboxh=\hbox{\mathsurround=0pt%
                  $\displaystyle{\left(%
                      \vrule width 0pt height\ht\LLbox depth\dp\LLbox
                      \right)}$}%
       \left(\hskip-.3\wd\LLboxh\relax\copy\LLbox
              \hskip-.3\wd\LLboxh\relax\right)}
\def\ZBOX#1#2#3{\def#3{}%
                \setbox#1 = #2
                \def#3{ to \wd#1}%
                \setbox#1 = #2}
\def\SLdreieck{\setbox\TENbox=\hbox{\fontscsy\char 52}%\fontscex\char\dq 65
                 \dp\TENbox = 0pt
                 \hbox{\hskip -2\SLrandausgleich
                       \box\TENbox
                       \hskip -2\SLrandausgleich}}
\def\SLtilde{\setbox\TENbox=\hbox{\fontscex\char 101}%{\fontscex\char\dq 65
                    \vbox{\vskip-.03\ht\TENbox
                          \hbox{\hskip -1\SLrandausgleich
                                \copy\TENbox
                                \hskip -1\SLrandausgleich}
                          \vskip -.86\ht\TENbox}}
\def\SLstrich{\vrule width \SLeffbreite height.4pt}
\def\SLpunkt{{\vbox{\hbox{$\displaystyle.$}\vskip.03cm}}}
\def\SLabstand{\vskip .404pt}
\def\SLzwischen{\vskip 1.372pt}
\font\fontscsy=cmsy6 \font\fontscex=cmex10 scaled 1200
\def\STAPEL#1{\def\SLkern{#1}%
              \futurelet\next\SLpruef
               A_0   _0    :B_0   _-.17 :C_.05 _-.15 :D_0   _-.2
              :E_0   _-.2  :F_0   _-.21 :G_0   _-.15 :H_0   _-.23
              :I_.2  _.15  :J_.05 _-.1  :K_0   _-.22 :L_0   _-.1
              :M_0   _-.23 :N_0   _-.25 :O_.05 _-.2  :P_0   _-.21
              :Q_.05 _-.2  :R_0   _-.03 :S_.15   _-.15 :T_.2  _0
              :U_.1  _-.1  :V_.1  _-.15 :W_.1  _-.2  :X_0   _-.22
              :Y_.16 _-.15 :Z_0   _-.25
              :a_.05 _-.05 :b_.05 _0    :c_.05 _.05  :d_0   _-.05
              :e_.07 _0    :f_0   _-.15 :g_.04 _-.2  :h_0   _-.07
              :i_.05 _0    :j_.08 _-.1  :k_0   _-.1  :l_.2  _.15
              :m_0   _-.1  :n_0   _-.1  :o_0   _-.1  :p_.15 _0
              :q_.1  _0    :r_.1  _-.1  :s_0   _-.2  :t_.1  _.05
              :u_0   _-.1  :v_0   _-.2  :w_0   _-.2  :x_.04 _-.14
              :y_.15 _-.05 :z_0   _-.15
              :\mit\Phi_.08 _-.1    :\mit\Omega_0 _-.2   :\varXi_.00 _-.2
              :\alpha_0 _-.2        :\gamma_.1 _-.1      :\varepsilon_.1 _-.1
              :\epsilon_.05 _-.05   :\eta_.05 _-.15      :\lambda_0 _0
              :\mu_0 _-.25          :\nu_0 _-.2          :\varSigma_-.03 _-.2
              :\varrho_.00 _-.2     :\sigma_.1 _-.2      :\tau_.15 _-.15
              :\varphi_.2 _-.1      :\omega_.1 _-.1      :\mit\Gamma_-.1 _-.1
              :\Lambda_0 _0         :\Gam_0 _0           :\Lam_0 _0
              :\SLsuchende
              \def\SLtrick{\noexpand\SLtrick\noexpand}%
                \def\SLdummy{\noexpand\SLdummy}%
                \edef\SLoboxinhalt{}\edef\SLuboxinhalt{}%
                \SLobenfalse\SLuntenfalse
                \futurelet\next\SLsuchruf}
  \def\SLsuchruf{\ifx\next !\let\next\SLexpand
                 \else\let\next\SLerzeug\fi\next}
  \def\SLexpand#1#2#3{\ifx #2\sb\ifSLunten\let\SLspeicher\SLuboxinhalt
                   \else\def\SLspeicher{\SLtrick\SLabstand}\fi
                   \edef\SLuboxinhalt{%
                       \SLspeicher
                       \SLtrick\SLzwischen
                       \hbox\SLdummy{\hfil\mathsurround=0pt
$\SLtrick\scriptstyle\SLtrick#3$%
                                     \hfil}}%
                   \SLuntentrue%
                      \else\ifSLoben\let\SLspeicher\SLoboxinhalt
                   \else\def\SLspeicher{\SLtrick\SLabstand}\fi
                   \edef\SLoboxinhalt{%
                       \hbox\SLdummy{\hfil\mathsurround=0pt
$\SLtrick\scriptstyle\SLtrick#3$%
                                     \hfil}%
                       \SLtrick\SLzwischen
                       \SLspeicher}%
                   \SLobentrue\fi\futurelet\next\SLsuchruf}
  \def\SLerzeug{\def\SLtrick{}
                \setbox\SLhilfbox=\hbox{$\displaystyle{E}$}%
                \SLrandausgleich=.04\wd\SLhilfbox
                      \setbox\SLhilfbox=%
                         \hbox{\hskip -1\SLrandausgleich
                          \mathsurround=0pt$\displaystyle{\SLkern}$%
                               \hskip -1\SLrandausgleich}%
                      \SLhoehe = \ht\SLhilfbox
                      \advance\SLhoehe by \dp\SLhilfbox
                      \SLeffbreite = \wd\SLhilfbox
                      \advance\SLeffbreite by \SLab\SLhoehe
                      \ZBOX\SLubox{\vbox{\offinterlineskip
                                         \SLuboxinhalt
                                         \hrule height 0pt}}\SLdummy
                      \ZBOX\SLobox{\vbox{\offinterlineskip
                                         \SLoboxinhalt
                                         \hrule height 0pt}}\SLdummy
                      \SLsp = \SLzu\SLhoehe
                      \advance\SLsp by -.5\SLeffbreite
                      \SLuvorschub = -1\SLsp
                      \advance\SLuvorschub by -.5\wd\SLubox
                      \SLovorschub = -1\SLsp
                      \advance\SLovorschub by -.5\wd\SLobox
                      \advance\SLovorschub by .26\SLhoehe
                      \ifdim\SLuvorschub > \SLovorschub
                         \SLsp = \SLovorschub
                      \else
                         \SLsp = \SLuvorschub
                      \fi
                      \ifdim\SLsp < 0pt%
                         \advance\SLuvorschub by -1\SLsp
                         \SLmvorschub = -1\SLsp
                         \advance\SLovorschub by -1\SLsp
                      \else
                         \SLmvorschub = 0pt
                      \fi
                      \setbox\SLergebnis = \hbox{%
                         \offinterlineskip
                         \hskip\SLrandausgleich\relax
                         \vbox to 0pt{%
                            \vskip -1\ht\SLobox
                            \vskip -1\ht\SLhilfbox
\hbox{\hskip\SLovorschub\copy\SLobox\hfil}%
                            \hbox{\hskip\SLmvorschub\copy\SLhilfbox
                                  \hfil}%
\hbox{\hskip\SLuvorschub\copy\SLubox\hfil}%
                            \vss}%
                         \hskip\SLrandausgleich}%
                      \SLsp = \dp\SLhilfbox
                      \advance\SLsp by \ht\SLubox
                      \dp\SLergebnis = \SLsp
                      \SLsp = \ht\SLhilfbox
                      \advance\SLsp by \ht\SLobox
                      \ht\SLergebnis = \SLsp
                      \box\SLergebnis{}}
  \def\SLpruef{\ifx\next\SLsuchende\def\SLzu{0}\def\SLab{0}%
                  \def\next##1\SLsuchende{\relax}%
               \else\let\next\SLvergl
               \fi\next}
  \def\SLvergl#1_#2_#3:{\def\SLv{#1}%
                        \ifx\SLkern\SLv\def\SLzu{#2}\def\SLab{#3}%
\def\next##1\SLsuchende{\relax}%
                        \else\def\next{\futurelet\next\SLpruef}
                        \fi\next}
\newbox\minusbox
\def\minus{\mathchoice{\minusarb\displaystyle}%
                      {\minusarb\textstyle}%
                      {\minusarb\scriptstyle}%
                      {\minusarb\scriptscriptstyle}}
  \def\minusarb#1{\setbox\minusbox=\hbox{$#1x$}%
                  \hbox{\hskip .10\wd\minusbox
                        \vbox{\hrule width .6\wd\minusbox
                                     height .07\wd\minusbox
                              \vskip.53\ht\minusbox}%
                        \hskip .10\wd\minusbox}}
\def\ECK#1{#1!^\SLdreieck}
\def\Ttilde{\STAPEL T!_\SLstrich!_\SLstrich!^\SLtilde}
\def\Tenz#1{\STAPEL {#1}!_\SLstrich!_\SLstrich}
\def\Vek#1{\STAPEL {#1}!_\SLstrich}
\def\Tenv#1{\STAPEL {#1}!_\SLstrich!_\SLstrich!_\SLstrich!_\SLstrich}
\def\Ceck{\STAPEL C!_\SLstrich!_\SLstrich!^\SLdreieck}
\def\CG{\STAPEL C!_\SLstrich!_\SLstrich!^G}
\def\ABL#1\nach#2{\frac{\partial #1}{\partial #2}}
\def\ppp{\quad .}
\def\von#1{\left(#1\right)}
\def\PKTs#1{\STAPEL{#1}!^\SLpunkt}
\def\F{\STAPEL F!_\SLstrich!_\SLstrich}
\def\C{\STAPEL C!_\SLstrich!_\SLstrich}
\def\I{\STAPEL I!_\SLstrich!_\SLstrich}
\def\X{\STAPEL X!_\SLstrich!_\SLstrich}
\def\Y{\STAPEL Y!_\SLstrich!_\SLstrich}
\def\e{\STAPEL e!_\SLstrich}
\def\Sollsein{\stackrel{!}{=}}
\def\rhot{\widetilde\rho}
\def\Nab{\Vek\nabla}
\def\Nabtil{{\widetilde\Nab}}
\def\inv#1{#1^{\minus 1}}
\begin{document}

\begin{frontmatter}

%% Title, authors and addresses

%% use the tnoteref command within \title for footnotes;
%% use the tnotetext command for the associated footnote;
%% use the fnref command within \author or \address for footnotes;
%% use the fntext command for the associated footnote;
%% use the corref command within \author for corresponding author footnotes;
%% use the cortext command for the associated footnote;
%% use the ead command for the email address,
%% and the form \ead[url] for the home page:
%%
%% \title{Title\tnoteref{label1}}
%% \tnotetext[label1]{}
%% \author{Name\corref{cor1}\fnref{label2}}
%% \ead{email address}
%% \ead[url]{home page}
%% \fntext[label2]{}
%% \cortext[cor1]{}
%% \address{Address\fnref{label3}}
%% \fntext[label3]{}

\title{On the thermodynamics of pseudo-elastic material models to reproduce the \textsc{Mullins} effect}

%% use optional labels to link authors explicitly to addresses:
%% \author[label1,label2]{<author name>}
%% \address[label1]{<address>}
%% \address[label2]{<address>}

\author[TUC]{Christoph Naumann\corref{cor1}}
\ead{christoph.naumann@mb.tu-chemnitz.de}
\author[TUC]{J\"orn Ihlemann}
\ead{joern.ihlemann@mb.tu-chemnitz.de}
\cortext[cor1]{Corresponding author}
\address[TUC]{Technische Universit\"at Chemnitz,
Reichenhainer Stra{\ss}e 70, 09126 Chemnitz}

\begin{abstract}
This work focuses on the thermodynamics of pseudo-elastic models which represent the \textsc{Mullins} effect. Two established models are analyzed theoretically, their thermomechanical properties are derived, and certain critical points are identified. These findings are used to deduce an alternative approach to deviate pseudo-elasticity. This is achieved by defining a suitable free energy which imposes conditions on the stress tensor and the dissipation using the \textsc{Clausius-Duhem} inequality. The concept of pseudo-elasticity is then generalized to extend arbitrary thermomechanical, even inelastic, material models to allow for softening effects. Under weak assumptions on the softening function the thermomechanical consistency is shown.
\end{abstract}

\begin{keyword}
%% keywords here, in the form: keyword \sep keyword
Mullins effect \sep softening \sep pseudo-elasticity \sep thermomechanical consistency
%% MSC codes here, in the form: \MSC code \sep code
%% or \MSC[2008] code \sep code (2000 is the default)
\end{keyword}

\end{frontmatter}

% \linenumbers
%
% \section{ToDo}

% \begin{itemize}
%  \item \sout{Abstract ausarbeiten}
%  \item \sout{Einleitung ausformulieren}
%  \item \sout{Mikromechanisch motivierte Modelle aufschreiben}
%  \item \sout{belastunsginduzierte Anisotropie}
%  \item \sout{Coleman-Noll-paper zitieren}
%  \item \sout{Zitate richtig einfügen}
%  \item \sout{Grundlagenkapitel überarbeiten}
%  \item \sout{$\mathcal D$ durch $\rhot\mathcal D$ ersetzen}
%  \item \sout{Plot von T1 über k fuer OR-Ansatz}
%  \item \sout{T1 über k fuer OR-Ansatz: virgin curve als dotted darstellen}
%  \item \sout{ Beweis Unsymmetrie bei EB-Modell}
%  \item \sout{Beweis zur Polykonvexität durchführen}
%  \item \sout{ Polykonvexität: Ball/Schröder-Neff lesen, warum Existenz etc. ...}
%  \item \sout{ Inelastisches Materialmodell implementieren und Ergebnis zeigen}
%  \item \sout{Anhang: Herleitung, dass die Dissipation-rate von OR das gleiche ist wie die Dissipation, die ich hergeleitet habe}
%  \item \sout{Bildunterschriften ausformulieren}
%  \item \sout{Diagramme: englische Zahlen}
%  \item \sout{partielle Ableitung unter Nebenbedingung noch falsch}
%  \item \sout{Viskoelastizitaet: tau als Materialparameter ersetzen}
%  \item \sout{Viskoelastizitaet: Diskussion der Ergebnisse}
%  \item \sout{Anhang: Beweis der Unsymmetrie der Materialsteifigkeit des ZB-Ansatzes: Text besser schreiben}
%  \item \sout{Zusammenfassung schreiben}
% \end{itemize}

%% main text
\section{Introduction}
\label{sec:Introduction}The mechanical material behavior of filled rubber is strongly affected by the \textsc{Mullins} effect (\cite{Mullins1948}). Due to this effect, the stress is significantly reduced depending on the maximum load which occurred in the previous history of the material. The micromechanical origin of the \textsc{Mullins} effect has been discussed extensively and yet there is no general agreement in the scientific community. Mullins himself introduced a hard and a soft rubber phase and attributed the softening effect to the conversion of hard into soft rubber by prestraining. \cite{Bueche1960} suggested that the debonding of chains from fillers leads to the softening of the material. The breakdown of filler clusters was proposed by \cite{Kluppel2000} as the reason for the \textsc{Mullins} effect. \cite{Marckmann2002} suggested that the breaking of crosslinks and weak bonds is responsible for the softening. The theory of Self-Organizing Linkage Patterns (SOLP) explains the \textsc{Mullins} effect as a result of a self-organization process based on the weak physical bonds (\cite{Besdo2003a}). Recently it was demonstrated that this concept also provides explanations for other phenomena of rubber behavior (\cite{Wulf2011}). Various other concepts have been suggested, for instance slipping of chain molecules or molecule disentanglement. An extensive overview of physical explanations for the \textsc{Mullins} effect is given in \cite{Diani2009}.\\
Especially in heterogeneously loaded components this effect has to be considered, as different regions with varying stiffness evolve. Due to this resulting inhomogeneous distribution of the material properties, the overall behavior of the component is affected considerably. Therefore, the \textsc{Mullins} effect has to be taken into account to simulate the behavior of elastomeric components.\\
Several micromechanically motivated models have been developed. A model based on the decomposition of the rubber network into a purely elastic polymer part and a polymer-filler network which provides the softening (\cite{Govindjee1991} and \cite{Govindjee1992}) adopts the assumption of \cite{Bueche1960}. The idea of hard and soft rubber phases of \cite{Mullins1948} was used as a basis for a model derived by \cite{Qi2004}.\\
As the micromechanical origin of the \textsc{Mullins} effect is yet not completely understood, phenomenological models still enjoy a great popularity. Especially models based on continuum damage mechanics are widely used. There, a scalar damage variable is introduced which reduces a suitable basic hyperelastic stress (e.g. \cite{Miehe1995a,Chagnon2004}). The main advantage of these models is that the thermomechanical consistency can easily be achieved by defining an appropriate evolution law for the damage variable. Moreover, this approach can be used for extending existing even inelastic models to allow for the Mullins effect (cf. \cite{Simo1987,Miehe2000,Lion1996}). However, the frequently used simple exponential evolution equations are often not capable of reproducing the complex softening effects and more complicated evolution laws have to be defined (\cite{Kaliske2001}). This may lead to a large number of material parameters which complicates a reliable parameter identification significantly.\\
Another approach to represent softening effects are the so-called pseudo-elastic models proposed by \cite{Ogden1999}. These models are based on a hyperelastic description of the virgin material behavior. During unloading and successive reloading the material responds elastically but softer. This is achieved by scaling the basic hyperelastic stress with an appropriate softening function. The main advantage of this family of models is that it is relatively easy to find a suitable softening function as its shape can be directly derived from experimental data (cf. \cite{Kazakeviciute-Makovska2007}). These pseudo-elastic models are widely used and enhanced due to their simplicity (\cite{Ogden2001,Zuniga2002,Dorfmann2003,Dorfmann2004,Guo2006,Pena2009,Zhang2011}).\\
The aforementioned models represent only the isotropic \textsc{Mullins} effect. However, there is experimental evidence that this effect induces anisotropy which cannot be reproduced by these models. However, to represent these anisotropic effects a one-dimensional version for uniaxial tension and compression may be used within directional approaches. It has been shown that this is an effective technique (e.g. \cite{Goktepe2005,Itskov2010,Freund2010,Diercks2013}).\\
The paper has the following structure: Section \ref{sec:pre} deals with the introduction into the used notation and basics of continuum thermodynamics. In section \ref{sec:PE} the basic equations of pseudo-elastic material models are briefly reviewed. Section \ref{sec:OR} shows the approach used in \cite{Ogden1999} which enables the deduction of pseudo-elastic material models by a special free energy function. There, critical points are reviewed. Another model based on the idea of pseudo-elasticity is introduced in \cite{Zuniga2002}. The assumptions made therein are revisited in section \ref{sec:EZB} and critical properties are identified. In section \ref{sec:mine} an alternative approach is presented which enables a consistent deduction of pseudo-elastic models by introducing a suitable free energy. The insertion into the \textsc{Clausius-Duhem} inequality and application of standard arguments directly yields the stress tensor and the dissipation. Thermomechanical consistency is achieved under weak assumptions on the properties of the softening function. Basic properties of the resulting model are deduced and compared to the existing approaches described in the foregoing sections. In the last section, the concept of pseudo-elasticity is extended to allow for arbitrary, even inelastic, models as basic material models. It is shown that the thermomechanical consistency is assured under weak assumptions on the properties of the softening function.

\section{Preliminaries}
\label{sec:pre}
Throughout this article, underlined symbols denote tensors in $\mathbb R^3$ whereby the number of lines determines the rank. In this work, $\I$ stands for the second-rank identity tensor. The deviatoric part of a tensor is defined as $\left( \X \right)' = \X - \frac{1}{3}\left( \X\ppkt\I \right)\I$. The material time derivative is denoted by a superimposed triangle. The superimposed index $G$ denotes the unimodular part of a second rank tensor, i.e. $\X!^G = \det\von\X^{-1/3}\X$. The dyadic product is denoted by $\circ$.\\
One focus of this paper lies on the investigation of thermodynamic properties of material models. Therefore, the second law of thermodynamics is needed:
\begin{equation}
 \rhot \mathcal D_a = \frac{1}{2}\Ttilde\ppkt\Ceck - \rhot \PKTs \psi - \rhot s \PKTs\theta  - \frac{1}{\theta} \Vek Q \pkt \Nabtil \theta \geq 0 \label{eq:CDU_a} \ppp
\end{equation}
This inequality is also known as \textsc{Clausius-Duhem} inequality. In equation \eqref{eq:CDU_a}, $\mathcal D_a$ denotes the overall dissipation, $\Ttilde$ the second \textsc{Piola-Kirchhoff} stress tensor, $\ECK\C$ the material time derivative of the right \textsc{Cauchy-Green} tensor, $\rhot$ the density in the reference configuration, $\psi$ the free energy, $s$ the entropy, $\theta$ the absolute temperature, and $\Vek Q$ the \textsc{Piola-Kirchhoff} heat flux vector. For an exhaustive overview on thermomechanics and a detailed derivation of the inequality \eqref{eq:CDU_a} the reader is referred to \cite{Haupt1999}.\\
Following \cite{Coleman1963a} one possibility to consider the second law of thermodynamics \eqref{eq:CDU_a} is to formulate the material models \textit{a priori} in such a way that no thermomechanical process exists which may violate the \textsc{Clausius-Duhem} inequality. Such material models are then called thermomechanically consistent.\\
If the \textsc{Fourier} model of heat conduction is used for the heat flux $\Vek Q$ (with the heat conductivity $\kappa$), the inequality \eqref{eq:CDU_a} can be divided into two parts:
\begin{alignat}{1}
 &\Vek Q = - \kappa \Nabtil \theta, \quad\mathcal{D}_q := \frac{\kappa}{\theta} \Nabtil \theta \pkt \Nabtil \theta \ge 0,\quad \Rightarrow \mathcal D_a = \mathcal D + \mathcal{D}_q \ge 0  \\
 &\text{for} \quad \mathcal{D} :=  \frac{1}{2}\Ttilde\ppkt\Ceck - \rhot \PKTs \psi - \rhot s \PKTs\theta \ge 0 \ppp \label{eq:CDU}
\end{alignat}
Equation \eqref{eq:CDU} defines the mechanical part of the \textsc{Clausius-Duhem} inequality and the (mechanical) dissipation $\mathcal{D}$. It is assumed that for thermomechanically consistent material models the dissipation $\mathcal{D}$ has to take non-negative values which obviously leads to a non-negative overall dissipation.\\
When dealing with purely mechanical material models the isothermal approximation is often used. There, any change in temperature is neglected, i.e. $\PKTs\theta = 0$. The inequality \eqref{eq:CDU} can then be simplified to
\begin{equation}
 \rhot \mathcal D = \frac{1}{2}\Ttilde\ppkt\Ceck - \rhot \PKTs \psi \ge 0 \ppp \label{eq:CDUisot}
\end{equation}
In the following section \ref{sec:PE} this restriction is used as only isothermal processes are examined by \cite{Ogden1999} and \cite{Zuniga2002}. The generalization to thermomechanical processes is described in section \ref{sec:general}.

\section{Pseudo-elastic material models}
\label{sec:PE}
The basis of pseudo-elastic material models is a hyperelastic basic stress $\Ttilde_0$, which is derived from a suitable free energy function $\psi_0\von\C$:
\begin{equation}
 \Ttilde_0 := 2 \ABL \rhot \psi_0 \nach \C \ppp
\end{equation}
The key idea of pseudo-elastic material models is that the stress during the first loading process is equal to the basic stress $\Ttilde_0$. Upon unloading and reloading the basic stress is scaled by a positive softening function $\eta$. The function $\eta$ thereby depends on a measure $\xi$ for the current load and its maximum $\xi_{max}\von t = \max{\left\{\xi\von\tau, \tau\le t \right\}}$ in the history of the material:
\begin{equation}
 \Ttilde = \eta\von{\xi, \xi_{max}} \Ttilde_0, \quad \text{where } \eta
 \begin{cases}
  = 1, \quad & \xi = \xi_{max},\\
  < 1, & \xi < \xi_{max}
 \end{cases} \ppp \label{eq:PE_Ttilde}
\end{equation}
Different models make use of different measures $\xi$ for the load in the material. In the following two sections, the model of \cite{Ogden1999} and \cite{Zuniga2002} are investigated, which use the free energy of the basic model and a norm of the right \textsc{Cauchy-Green} tensor to quantify the load intensity, respectively. The resulting thermodynamic properties are discussed.

\subsection{The model of Ogden and Roxburgh}
\label{sec:OR}
To derive pseudo-elastic models, \cite{Ogden1999} define the following free energy function:
\begin{equation}
 \psi_{OR}\von{\C, \eta} = \eta \psi_0\von\C + \phi\von\eta \ppp\label{eq:OR_psi}
\end{equation}
In equation \eqref{eq:OR_psi} $\psi_0$ denotes the free energy of the basic material model and $\phi$ is a function which only depends on the softening function $\eta$.
By using equilibrium arguments, the authors demand that the partial derivative of the free energy with respect to the softening function vanishes:
\begin{equation}
 \ABL\psi_{OR}\nach\eta \Sollsein 0 = \psi_0 + \ABL\phi\nach\eta \ppp \label{eq:OR_eta}
\end{equation}
This implicitly defines the softening function $\eta$ depending on $\psi_0$, i.e. the free energy of the basic material model is used as a measure for the current load in the material. Consequently, as $\eta = 1$ for the maximum load, Ogden and Roxburgh use the maximum of the free energy of the basic material model as a measure for the maximum load in the history:
\begin{equation}
\eta = \eta \von{\psi_0, \psi_{max}} , \quad \text{where} \quad \psi_{max}:=\max{\left\{\psi_0\von\tau, \tau\le t \right\}}
\end{equation}
Moreover, equilibrium arguments define the second \textsc{Piola-Kirchhoff} stress as the partial derivative of the free energy with respect to the right \textsc{Cauchy-Green} tensor:
\begin{equation}
 \Ttilde = 2 \ABL\rhot\psi_{OR}\nach\C = 2 \eta \ABL\rhot\psi_0\nach\C = \eta \Ttilde_0 \ppp \label{eq:OR_Ttilde}
\end{equation}
This is the desired formula for the pseudo-elastic material model. The relations \eqref{eq:OR_eta} and \eqref{eq:OR_Ttilde} are sufficient to show the thermodynamic properties of the model. Toward that end, the time derivative of the ansatz \eqref{eq:OR_psi} is inserted into the \textsc{Clausius-Duhem} inequality \eqref{eq:CDUisot}:
\begin{alignat}{1}
 \rhot \mathcal D = \frac{1}{2}\Ttilde\ppkt\ECK\C - \rhot \PKTs\psi_{OR} &= \frac{1}{2}\Ttilde\ppkt\ECK\C - \ABL\rhot\psi_{OR}\nach\C \ppkt \ECK\C - \ABL\rhot\psi_{OR}\nach\eta \PKTs\eta\\
 &=\left(\frac{1}{2}\Ttilde - \eta\ABL\rhot\psi_0 \nach\C\right)\ppkt\ECK\C - \rhot \left( \psi_0 + \ABL\phi\nach\eta \right)\PKTs\eta \label{eq:OR_CDU}
\end{alignat}
Insertion of \eqref{eq:OR_eta} and \eqref{eq:OR_Ttilde} directly leads to the dissipation $\mathcal D = 0$ for this ansatz. Thus, the energy needed to cause the softening in the material is not dissipated and converted into heat but stored in the material. However, all micromechanical theories of the \textsc{Mullins} effect predict that it is at least partly dissipative.
Thus, \cite{Ogden1999} define a ``dissipation rate'' (which represents a physical quantity equal to the dissipation $\mathcal D$) as the rate of the ``residual energy'' which accumulates in the material due to the softening effects. For the two softening functions used in \cite{Ogden1999} and \cite{Dorfmann2004} it can be shown that this dissipation is positive and the authors therefore conclude that the model is consistent with the second law of thermodynamics. However, the positive dissipation does not follow from exploiting the \textsc{Clausius-Duhem} inequality \eqref{eq:CDUisot}. This drawback can be avoided by using the suitable free energy described in section \ref{sec:mine}.\\
To illustrate the behavior of the \textsc{Ogden-Roxburgh} model, a simple shear test with increasing amplitude is performed. For a given orthonormal basis $\left\{\e_x, \e_y, \e_z\right\}$ the following deformation gradient is defined:
\begin{equation}
 \F := \I + k\von t \e_x \circ \e_y \ppp \label{eq:shear}
\end{equation}
The temporal course of the shear deformation $k\von t$ has the following form (figure \ref{fig:kt_OR}):
\begin{equation}
 k\von t := \frac{1}{2}\hat k \left[ 1 - \exp{\left( -\frac{t}{\tau} \right)} \right]\left( 1 - \cos(\omega t) \right) \ppp\label{eq:OR_kt}
\end{equation}
The basic material model is Neo-Hookean. Moreover, the softening function introduced in \cite{Ogden1999} is used:
\begin{alignat}{1}
 \rhot\psi_0 :=& C_{10}\left( \CG \ppkt \I - 3 \right),\label{eq:NH_psi}\\
 \Ttilde_0 =& 2 C_{10} \left( \CG \right)'\pkt \inv\C,\label{eq:NH_Ttilde}\\
 \eta :=& 1 - \frac{1}{r}\operatorname{erf}\left[\frac{1}{m} \left(\psi_{max} - \psi_0  \right) \right] \ppp\label{eq:OR_etaerf}
\end{alignat}
The material parameters are set to $C_{10} = 1 \text{MPa}$, $r = 1$, and $m = 1 \text{MPa}$.
\begin{figure}[h]
	\centering
		\includegraphics[width=0.4\textwidth]{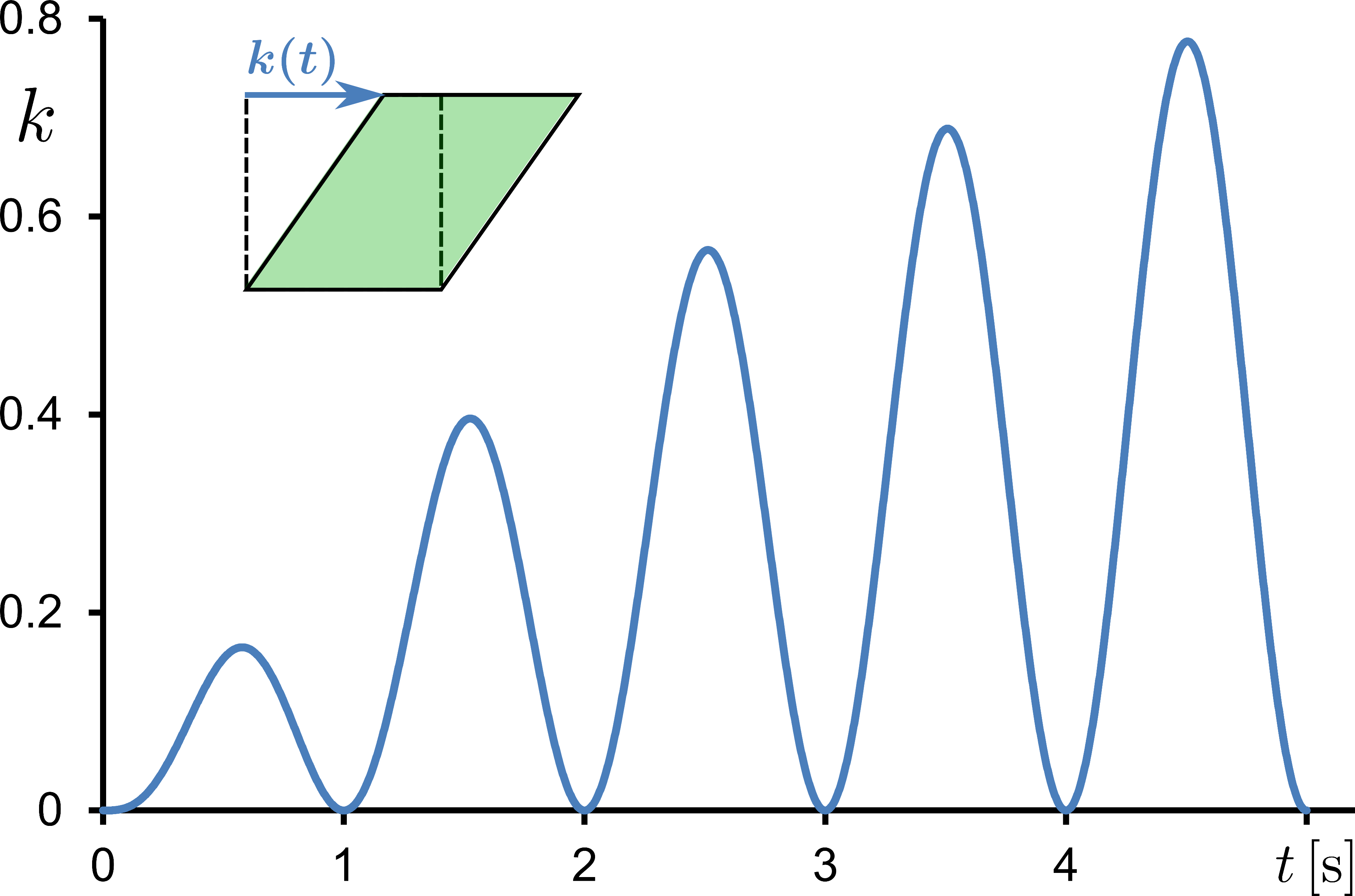}
	\caption{Temporal course of the shear strain $k\von t$ which is used to test the approach of \cite{Ogden1999}.}
	\label{fig:kt_OR}
\end{figure}
In figure \ref{fig:OR_T1_psi} the first \textsc{Piola-Kirchhoff} stress $T_{yx}$ is plotted against the shear strain $k$. The upper dotted curve describes the behavior of the Neo-Hookean basic material model. This stress is computed by the pseudo-elastic material model, if the current deformation is the maximum deformation in the history of the material. When the current deformation is below the maximum deformation the stress is scaled by a positive value $0 \le \eta < 1$. Thus, the pseudo-elastic model predicts a stress response on the lower solid curves for unloading and reloading up to the maximum deformation. The area between the virgin loading path (dotted) and the unloading path (solid) is the dissipated energy due to the softening. However, the special free energy proposed in \cite{Ogden1999} predicts that this energy is not converted into heat but stored in the material which directly follows from equation \eqref{eq:OR_CDU}. This fact is clarified in figure \ref{fig:OR_T1_psi}. There, the free energy $\psi_{OR}$ is plotted against time.
\begin{figure}[h]
	\centering
		\includegraphics[width=1.0\textwidth]{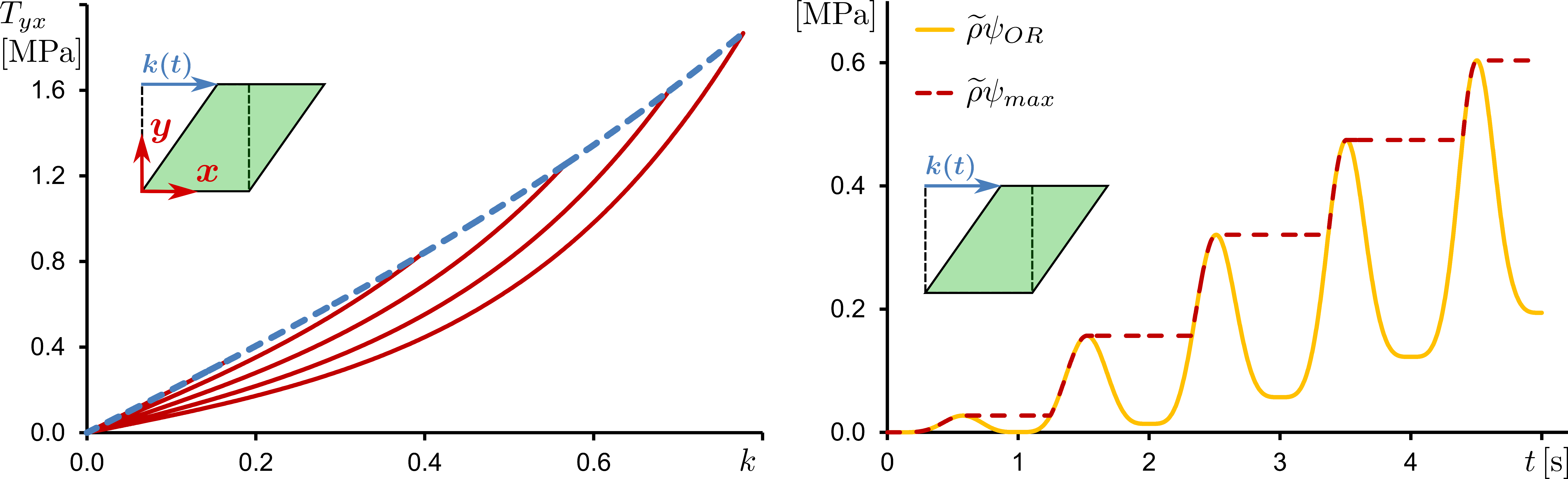}
	\caption{Left: Plot of the pseudo-elastic first \textsc{Piola-Kirchhoff} stress $T_{yx}$ for a simple shear test with increasing amplitude ($k\von t$ from equation \eqref{eq:OR_kt}); Right: Evolution of the free energy $\psi_{OR}$ of the \textsc{Ogden-Roxburgh} model (equation \eqref{eq:OR_psi}) and the history function $\psi_{max}\von t=\max{\left\{\psi_0\von\tau, \tau\le t \right\}}$}
	\label{fig:OR_T1_psi}
\end{figure}
Obviously there is a non-zero value of the free energy $\psi_{OR}$ when the material is fully unloaded, i.e. $k = 0$. This residual value is the energy which was needed to cause the softening in the material.

\subsection{The model of Elias-Zuniga and Beatty}
\label{sec:EZB}
\cite{Zuniga2002} also define a pseudo-elastic model by equation \eqref{eq:PE_Ttilde}. In their work, a norm of the isochoric right \textsc{Cauchy-Green} tensor is used as a measure $m$ for the deformation:
\begin{equation}
 m = \left\| \CG \right\| = \sqrt{\CG\ppkt\CG}, \quad m_{max}\von t = \max\left\{m\von\tau,\tau\le t \right\}\ppp
\end{equation}
In the model of \cite{Zuniga2002} the pseudo-elastic stress is defined by an arbitrary hyperelastic basic stress $\Ttilde_0$ and a softening function $\eta\von{m, m_{max}}$:
\begin{equation}
 \Ttilde := \eta\von{m, m_{max}}\Ttilde_0 \ppp
\end{equation}
The softening function has again the properties $\eta \von{m_{max}, m_{max}} = 1$ and $0 \le \eta \von{m, m_{max}} \le 1$.
No thermodynamical considerations were made in \cite{Zuniga2002}. However, the authors state that ``The unloading and reloading path is elastic so long as the magnitude of the strain does not exceed its greatest previous value''. This statement will be used to reveal an unfavorable property of this model.
Once the maximum of the deformation is fixed, the material response is elastic in the sense that the stress only depends on the current deformation and no dissipation occurs. The behavior, however, is not hyperelastic, i.e. the saved energy in the material depends on the loading path. This leads to an unphysical material response which is shown in the following.\\
Consider a temporal course of the deformation $\C\von t, t \ge t_0$ where $m$ does not exceed a formerly achieved value $m_{max}$ . Then the material behaves elastically, i.e. the dissipation $\mathcal D$ vanishes. Thus, the free energy can be directly determined by integrating the isothermal \textsc{Clausius-Duhem} inequality \eqref{eq:CDUisot} over time:
\begin{equation}
 m_{max} = \text{const.} \quad \Rightarrow \quad \rhot \mathcal{D} = 0 = \frac{1}{2} \Ttilde \ppkt \Ceck - \rhot \PKTs \psi_{EB} \quad \Rightarrow
 \quad \rhot \psi_{EB}\von\tau = \rhot \psi_{EB}\von{t_0} + \int\limits_{t_0}^\tau \frac{1}{2} \Ttilde\ppkt \Ceck \mathrm{d}t \ppp \label{eq:EB_psi}
\end{equation}
The integral which determines the free energy $\psi_{EB}$ is only independent of the path of the deformation $\C\von t$, if the derivative of $\Ttilde$ with respect to the right \textsc{Cauchy-Green} tensor $\C$ is symmetric:
\begin{equation}
 \psi_{EB}\von\tau = \psi_{EB}\von{\C\von\tau, m_{max}} \quad \Leftrightarrow \X \ppkt \ABL \Ttilde \nach \C \ppkt \Y =  \Y \ppkt \ABL \Ttilde \nach \C \ppkt \X  \label{eq:cond_symm} \ppp
\end{equation}
Here, $\X$ and $\Y$ denote arbitrary symmetric second rank tensors. For the model of \cite{Zuniga2002} this is only fulfilled for special cases, where the free energy of the basic model is only a function of $m$. But even with the Neo-Hookean model as a basic model the integrability condition \eqref{eq:cond_symm} is violated. In an analytical manner this is shown in \ref{sec:a_int}. A numerical example is given in the following. The basic material model is Neo-Hookean (equation \eqref{eq:NH_psi} and \eqref{eq:NH_Ttilde}) and the softening function proposed in \cite{Zuniga2002} is applied:
\begin{equation}
 \eta\von{m, m_{max}} = \exp{\left( -b\sqrt{m_{max} - m} \right)} \ppp
\end{equation}
In the following numerical test $C_{10} = 1 \text{MPa}$ and $b = 1$ are used. To demonstrate the drawbacks of this approach, a combined shear-tension test is performed. For a given orthonormal basis $\left\{ \e_x, \e_y, \e_z\right\}$ the following incompressible deformation gradient $\F$ is defined:
\begin{equation}
 \F = \lambda\von t \e_x\circ \e_x + \frac{1}{\sqrt{\lambda\von t}} \left(\e_y\circ \e_y +  \e_z\circ \e_z\right) + k\von t \e_x\circ \e_y \ppp \label{eq:EB_F}
\end{equation}
The temporal course of the stretch $\lambda$ and the shear $k$ are defined as follows:
\begin{alignat}{1}
       k\von t &= \frac{1}{2}\hat{k} \left( \sin(\omega t) + 1 \right),\label{eq:EB_kt}\\
      \lambda\von t &= 1 + \frac{1}{2}\hat\varepsilon\left( 1 - \cos(\omega t) \right) \ppp\label{eq:EB_lambda}
\end{alignat}
In this work, $\hat{k} = 2, \hat\varepsilon = 1$, and $\omega = 2\pi \, \text{s}^{-1}$ (cycle duration equals 1 second) are used. The history variable $m_{max}$ is set to 7, a value which will not be attained by this deformation process. In diagram \ref{fig:psi_EZB} the free energy $\psi_{EB}$ from equation \eqref{eq:EB_psi} is plotted against time. As $\psi_{EB}$ is defined up to an additive constant, $\psi_{EB}\von{t=0} = 0$ is used.
\begin{figure}[h]
	\centering
		\includegraphics[width=0.6\textwidth]{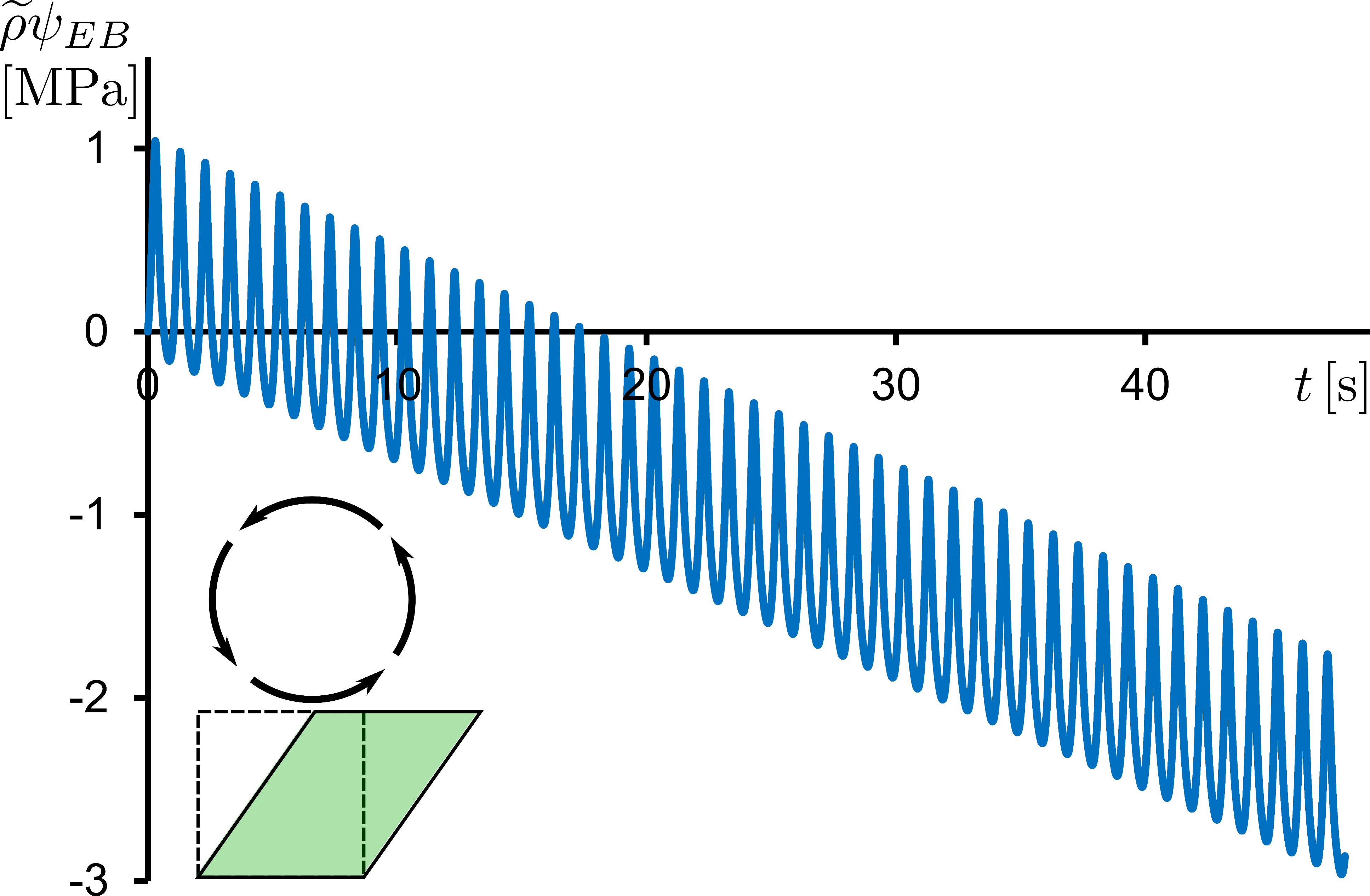}
	\caption{Evolution of the free energy $\psi_{EB}$ of the model of \cite{Zuniga2002} after prestretching for a combined shear-tension test (equations \eqref{eq:EB_F}-\eqref{eq:EB_lambda}).}
	\label{fig:psi_EZB}
\end{figure}
Obviously the energy stored in the material is reduced during every cycle. Thus, this material model predicts that it is possible to reduce the energy in the material boundlessly. In other words, the material provides an infinite supply of energy, which is not observed in real physical problems. Unquestionably, such kind of non-physical models should not be used.

\subsection{Thermomechanical consistency of pseudo-elastic models}
\label{sec:mine}
The aim of this article is to bypass the drawbacks of the approaches described in the previous sections. Thus, the first requirement is that the model behaves hyperelastically during unloading and reloading to rule out an unphysical behavior as described in section \ref{sec:EZB}.\\
This can easily be achieved by following the approach of \cite{Ogden1999} and using the free energy $\psi_0$ of the virgin material model as a measure for the load intensity:
\begin{equation}
 \Ttilde = \eta\von{\psi_0, \psi_{max}} 2 \ABL\rhot\psi_0\nach\C, \quad \text{where }\psi_{max}\von t := \max{\left\{\psi_0\von\tau, \tau\le t \right\}} \label{eq:PE_eta} \ppp
\end{equation}
This will automatically lead to hyperelastic behavior once the maximum load exceeded. To show this, it is sufficient to examine the stiffness-tensor for $\psi_{max} = \text{const.}$. The symmetry of this tensor is a sufficient condition for the integrability of the free energy (cf. equation \eqref{eq:cond_symm}). Differentiating \eqref{eq:PE_eta} we have
\begin{equation}
 \ABL\Ttilde \nach\C = \eta \ABL\Ttilde_0 \nach \C + \Ttilde_0 \circ \ABL\eta\nach\C
  = 2 \eta \ABL^2 \psi_0 \nach{\C^2} + \frac{1}{2} \ABL\eta \nach{\psi_0} \Ttilde_0 \circ \Ttilde_0 \ppp
\end{equation}
This tensor obviously is symmetric and thus hyperelastic behavior is obtained during unloading and successive reloading.\\
The second goal is to derive the constitutive equations directly by evaluating the \textsc{Clausius-Duhem} inequality without using any assumptions deduced from equilibrium-thermodynamics (like equations \eqref{eq:OR_eta} and \eqref{eq:OR_Ttilde}).
As the free energy $\psi_0\von\C$ is defined up to an additive constant it is assumed that in the undeformed configuration $\psi_0\von{\C = \I} = 0$.
The key idea is to define the following free energy for the pseudo-elastic material model:
\begin{equation}
 \psi\von{\psi_0, \psi_{max}} := \int\limits_0^{\psi_0} \eta\von{\xi, \psi_{max}} \mathrm{d}\xi \label{eq:PE_psi} \ppp
\end{equation}
This definition directly leads to $\psi\von{\psi_0 = 0, \psi_{max}} = 0$, i.e. the free energy will be always zero when the material is in the undeformed configuration. Moreover, differentiation yields
\begin{equation}
 \ABL\psi\von{\psi_0, \psi_{max}}\nach{\psi_0} = \eta\von{\psi_0, \psi_{max}} \ppp \label{eq:dpsidpsi0}
\end{equation}
Thus, the time derivative of $\psi$ can be obtained by using the chain-rule and equation \eqref{eq:dpsidpsi0}:
\begin{alignat}{1}
 \rhot\PKTs\psi &= \ABL\rhot\psi\von{\psi_0, \psi_{max}}\nach{\psi_0}\PKTs\psi_0 + \ABL\rhot\psi\von{\psi_0, \psi_{max}}\nach{\psi_{max}}\PKTs\psi_{max}\\
 &= \eta\ABL\rhot\psi_0\von\C\nach\C\ppkt\ECK\C + \ABL\rhot\psi\von{\psi_0, \psi_{max}}\nach{\psi_{max}}\PKTs\psi_{max}\\
 &= \eta \frac{1}{2} \Ttilde_0\ppkt\ECK\C + \ABL\rhot\psi\von{\psi_0, \psi_{max}}\nach{\psi_{max}}\PKTs\psi_{max} \label{eq:PE_psipkt} \ppp
\end{alignat}
Inserting \eqref{eq:PE_psipkt} into the \textsc{Clausius-Duhem} inequality \eqref{eq:CDUisot} we obtain
\begin{equation}
 \rhot \mathcal D = \left(\frac{1}{2}\Ttilde - \frac{1}{2} \eta \Ttilde_0 \right)\ppkt \ECK\C
  - \ABL\rhot\psi\nach{\psi_{max}}\PKTs\psi_{max} \ge 0 \label{eq:PE_CDU} \ppp
\end{equation}
This inequality is fulfilled for arbitrary $\ECK\C$, if the following conditions hold:
\begin{alignat}{1}
 \Ttilde &= \eta \Ttilde_0 \label{eq:PE_ttildecdu} \\
 \ABL\rhot\psi\nach{\psi_{max}}\PKTs\psi_{max} &\le 0 \label{eq:PE_D} \ppp
\end{alignat}
Condition \eqref{eq:PE_ttildecdu} is the desired stress formula of pseudo-elastic models which is now directly derived from the \textsc{Clausius-Duhem} inequality by using standard arguments.
If additionally the inequality \eqref{eq:PE_D} is satisfied, the dissipation is always positive and thermomechanical consistency is assured. In the following, we assume that $\eta\von{\psi_0, \psi_{max}}$ is continuous and continuously differentiable with respect to $\psi_{max}$. Then, integration and differentiation can be interchanged and the left hand side of inequality \eqref{eq:PE_D} can be written as:
\begin{equation}
 \ABL\psi\nach{\psi_{max}}\PKTs\psi_{max} = \ABL\nach{\psi_{max}}\int\limits_0^{\psi_0} \eta\von{\xi, \psi_{max}} \mathrm{d}\xi \, \PKTs\psi_{max}
 = \int\limits_0^{\psi_0} \ABL\eta\von{\xi, \psi_{max}}\nach{\psi_{max}} \mathrm{d}\xi \, \PKTs\psi_{max} \ppp
\end{equation}
Due to the definition of the history variable (equation \eqref{eq:PE_eta}), $\PKTs\psi_{max}$ is always non-negative. Thus, a sufficient condition for thermomechanical consistency is
\begin{equation}
 \ABL\eta\von{\psi_0, \psi_{max}}\nach{\psi_{max}} \le 0 \ppp
\end{equation}
This constraint can be easily interpreted: every softening function $\eta\von{\psi_0, \psi_{max}}$ which is a monotonically decreasing function of $\psi_{max}$ for every fixed $\psi_0$ leads to a positive dissipation. Thus, every function $\eta\von{\psi_0, \psi_{max}}$ which leads to further softening for larger deformations is admissible. For the softening functions proposed in \cite{Ogden1999} and \cite{Dorfmann2004} the appropriate free energy and dissipation are given in \ref{sec:a_psiOR}.\\
To compare the free energy $\psi_{OR}$ of \cite{Ogden1999} from equation \eqref{eq:OR_psi} and the new proposed $\psi$ from equation \eqref{eq:PE_psi} a plot of these free energies for the shear test described in section \ref{sec:OR} is depicted in figure \ref{fig:compare_psiOR_PE}. The basic material model is Neo-Hookean and the softening function from equation \eqref{eq:OR_eta} is used. The same material parameters and shear deformation cycle as in section \ref{sec:OR} are used.
\begin{figure}[h]
	\centering
		\includegraphics[width=0.7\textwidth]{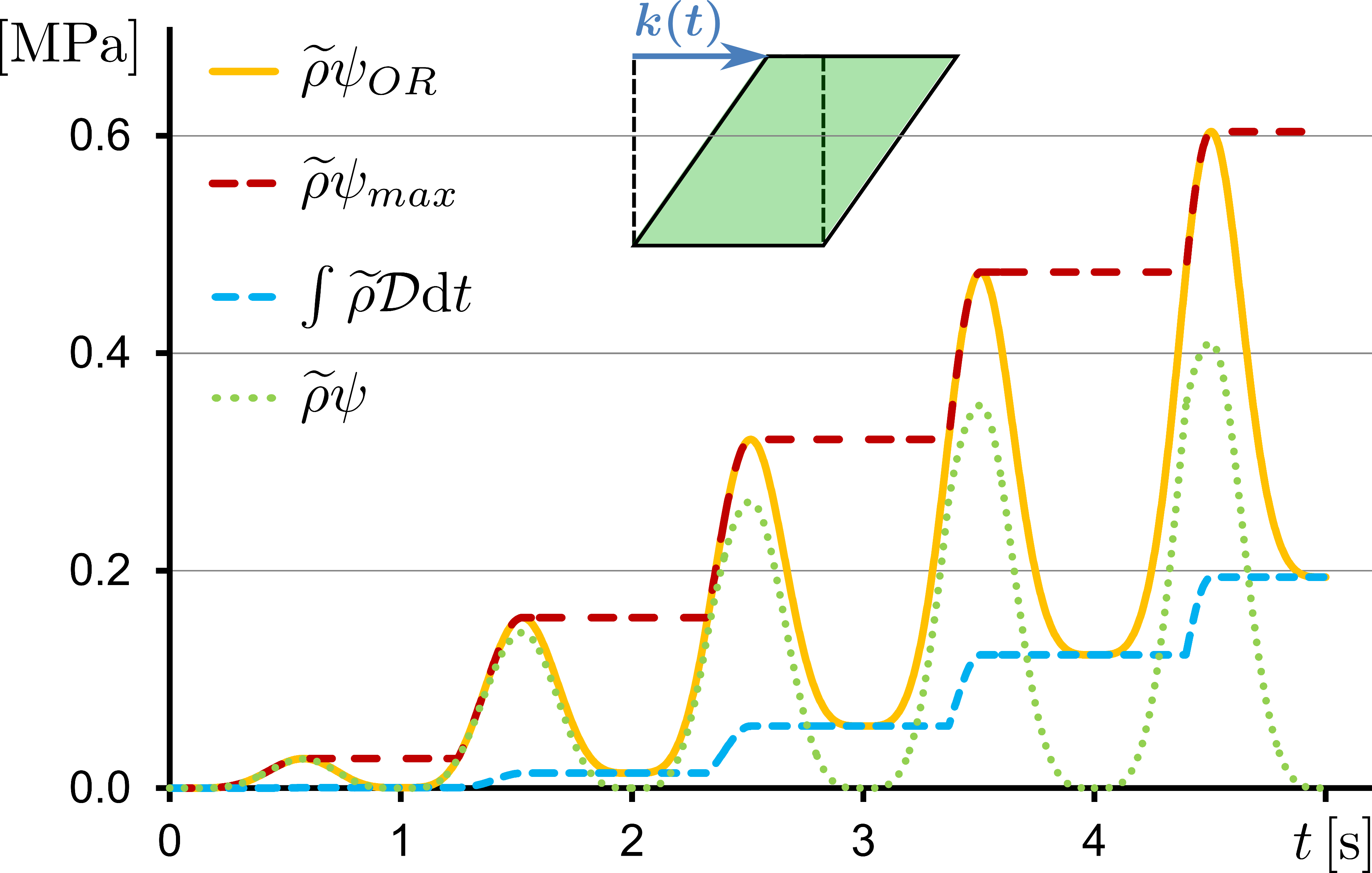}
	\caption{Evolution of the free energy $\psi_{OR}$ of the \textsc{Ogden-Roxburgh} model (equation \eqref{eq:OR_psi}), the history function $\psi_{max}\von t=\max{\left\{\psi_0\von\tau, \tau\le t \right\}}$, and the newly proposed free energy $\psi$ from \eqref{eq:PE_psi} for a simple shear test. The difference between the free energies $\psi_{OR}$ and $\psi$ is the dissipated energy $\int \mathcal D \mathrm d t$.}
	\label{fig:compare_psiOR_PE}
\end{figure}
It can be clearly seen that the proposed free energy \eqref{eq:PE_psi} always takes the value $0$ if the material is in the undeformed configuration and no residual energy accumulates in the material. Moreover, the difference between the free energies $\psi_{OR}$ and $\psi$ is the dissipated energy $\int \mathcal D \mathrm d t$ which is predicted by the left-hand side of \eqref{eq:PE_D}.

\begin{rmk}
As soon as the maximum load is reached, the behavior of the softening model is locally hyperelastic. Then, it may be desirable that the corresponding free energy is polyconvex to assure the existence of minimizers of the related variational principle (\cite{Ball1976}). For a polyconvex free energy of the basic material model this property will be preserved by the softening model given that $\eta\von{\psi_0, \psi_{max}}\ge 0$ and $\ABL\eta\nach{\psi_0} \ge 0$. Then, it directly follows from the definition \eqref{eq:PE_psi} of the free energy that $\psi\von{\psi_0, \psi_{max}}$ is a monotonic and convex function of $\psi_0$ for every fixed $\psi_{max}$. This directly yields the polyconvexity of the free energy $\psi$ (cf. \cite{Schroder2003}).
\end{rmk}

\begin{rmk}
The \textsc{Mullins} effect may only be partly dissipative, i.e. a fraction of the energy needed to cause the softening is not dissipated but stored in the material \footnote{ The consideration of such stored energy component becomes important if the equation of heat conduction is derived directly from the energy balance (\cite{Shutov2011}).}. This additional energy storage may also be considered within this approach. Indeed, the free energy $\psi$ can be decomposed in the following way:
\begin{equation}
 \psi = \int\limits_0^{\psi_0} \eta\von{\xi, \psi_{max}} \mathrm{d}\xi + \psi_s \ppp
\end{equation}
Here, the part $\psi_s$ represents the energy which is stored in the material and not dissipated due to the softening. It can easily be seen that the model is thermomechanically consistent if the following evolution equation for $\psi_s$ is defined:
\begin{equation}
 \PKTs\psi_s = -\gamma\von{\psi_{max}} \int\limits_0^{\psi_{0}} \ABL\eta\von{\xi, \psi_{max}}\nach{\psi_{max}} \mathrm{d}\xi \, \PKTs\psi_{max}, \quad \text{where } 0\leq \gamma\von{\psi_{max}} \leq 1 \label{eq:PE_psis} \ppp
\end{equation}
The material function $\gamma$ may or not depend on $\psi_{max}$ and has to take values between $0$ and $1$ to assure a positive dissipation.
Noting that $\PKTs\psi_{max} \neq 0$ only for $\psi_0 = \psi_{max}$, the evolution equation \eqref{eq:PE_psis} can be rewritten as follows:
\begin{equation}
\PKTs\psi_s = -\gamma\von{\psi_{max}} \int\limits_0^{\psi_{max}} \ABL\eta\von{\xi, \psi_{max}}\nach{\psi_{max}} \mathrm{d}\xi \, \PKTs\psi_{max} \ppp\label{eq:PE_psis1}
\end{equation}
Hence, the evolution equation \eqref{eq:PE_psis1} can be integrated exactly:
\begin{equation}
 \psi_s\von{\psi_{max}} = -\int\limits_0^{\psi_{max}} \gamma\von\zeta \int\limits_0^{\zeta} \ABL\eta\von{\xi, \zeta}\nach{\zeta} \mathrm{d}\xi \, \mathrm d \zeta \ppp
\end{equation}
If the value $\gamma = 1$ is used, the model of \cite{Ogden1999} is obtained; for $\gamma = 0$ the \textsc{Mullins} effect is fully dissipative.
\end{rmk}

\section{Generalization for arbitrary basic material models}
\label{sec:general}
The idea of the pseudo-elastic material models can be generalized to allow for arbitrary thermomechanically consistent material models. Let $\left\{\psi_0, \Ttilde_0, s_0, \mathcal{D}_0 \right\}$ a material model, which satisfies the \textsc{Clausius-Duhem} inequality for arbitrary thermomechanical processes:
\begin{equation}
 \rhot \mathcal{D}_0 = \frac{1}{2}\Ttilde_0\ppkt \Ceck - \rhot\left( \PKTs\psi_0 + s_0 \PKTs\theta \right) \ge 0 \label{eq:CDU_base} \ppp
\end{equation}
No assumptions concerning the structure of the model have to be made, i.e. the material model may for example be defined by internal variables (see \cite{Coleman1967a}) or have a functional form (like defined in \cite{Coleman1964a}).
Under this assumption, the softening effect can be added in a thermomechanically consistent manner as follows:
\begin{thm}
\label{thm:ge}
 Let $\left\{\psi_0, \Ttilde_0, s_0, \mathcal{D}_0 \right\}$ a thermomechanically consistent material model, $\psi_{max}\von t := \max{\left\{\psi_0\von\tau, \tau\le t \right\}}$, and $\eta\von{\psi_0, \psi_{max}} \ge 0$ a sufficiently smooth softening function such that $\displaystyle{\ABL\eta\nach{\psi_{max}}} \le 0$. Then the following softening model is thermomechanically consistent:
\begin{alignat}{1}
  \psi\von{\psi_0, \psi_{max}} &= \int\limits_0^{\psi_0} \eta\von{\xi, \psi_{max}} \mathrm{d}\xi \label{eq:ge_psi},\\
 \Ttilde &= \eta\Ttilde_0\label{eq:ge_Ttilde},\\
 s &= \eta s_0 \label{eq:ge_s},\\
 \mathcal D &= \eta \mathcal{D}_0 - \ABL \psi \nach {\psi_{max}} \PKTs\psi_{max} \ge 0 .\label{eq:ge_D} \ppp
\end{alignat}
\end{thm}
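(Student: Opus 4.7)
The plan is to reuse the mechanism developed in Section \ref{sec:mine}, now retaining the entropic term so that the argument goes through for arbitrary (nonisothermal) thermomechanical processes. The structural fact that makes this work is that the prescribed free energy \eqref{eq:ge_psi} depends on the kinematic and thermal variables only through the scalar $\psi_0$, together with the scalar history variable $\psi_{max}$. Consequently, a chain-rule computation of $\PKTs\psi$ will produce a copy of the basic model's Clausius--Duhem expression scaled by $\eta$, plus a clean residual in $\PKTs\psi_{max}$, which matches the claimed form of $\mathcal{D}$ in \eqref{eq:ge_D}.

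Concretely, I would first apply the fundamental theorem of calculus to \eqref{eq:ge_psi} to read off $\ABL\psi\nach{\psi_0}=\eta$, whence $\PKTs\psi = \eta\,\PKTs\psi_0 + \ABL\psi\nach{\psi_{max}}\PKTs\psi_{max}$. Inserting this together with the ansätze \eqref{eq:ge_Ttilde} and \eqref{eq:ge_s} into the full Clausius--Duhem inequality \eqref{eq:CDU} gives, after factoring $\eta$ out of the mechanical and entropic terms,
\begin{equation*}
\rhot\mathcal{D} = \eta\left[\frac{1}{2}\Ttilde_0\ppkt\Ceck - \rhot\PKTs\psi_0 - \rhot s_0\PKTs\theta\right] - \rhot\ABL\psi\nach{\psi_{max}}\PKTs\psi_{max}.
\end{equation*}
The bracketed expression is precisely $\rhot\mathcal{D}_0$ by \eqref{eq:CDU_base}, so the identity \eqref{eq:ge_D} follows at once.

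For the sign of $\mathcal{D}$, the first summand $\eta\mathcal{D}_0$ is non-negative by hypothesis ($\eta\ge 0$) and by the thermomechanical consistency of the base model. For the second, the smoothness of $\eta$ justifies interchanging differentiation and integration, yielding $\ABL\psi\nach{\psi_{max}} = \int_0^{\psi_0}\ABL\eta\nach{\psi_{max}}\mathrm{d}\xi$. Under the standing normalization $\psi_0\ge 0$ (together with $\psi_0(\C=\I)=0$, as in Section \ref{sec:mine}) this integral is non-positive because of the monotonicity assumption $\ABL\eta\nach{\psi_{max}}\le 0$. Combining this with $\PKTs\psi_{max}\ge 0$, which holds by definition of $\psi_{max}$ as a non-decreasing maximum functional, gives the non-negativity of the residual term and hence of $\mathcal{D}$.

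The main obstacle I anticipate is not any deep computation but rather the careful bookkeeping on the sign of $\psi_0$ inside the integral $\int_0^{\psi_0}$: if $\psi_0$ is merely normalized to vanish at the reference configuration but is allowed to become negative in processes with thermal coupling, the orientation of the integration flips and the sign conclusion on $\ABL\psi\nach{\psi_{max}}$ fails. This is the point where the theorem genuinely relies on the physical standing assumption that the basic free energy is non-negative; alternatively, one could rewrite \eqref{eq:ge_psi} as an integral from a global infimum of $\psi_0$, which leaves the preceding chain-rule derivation unaffected. Apart from this, the proof is a line-by-line transcription of the isothermal argument in Section \ref{sec:mine}, with the extra term $\rhot s_0\PKTs\theta$ absorbed into the basic-model dissipation.
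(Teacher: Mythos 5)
Your proof follows essentially the same route as the paper's: the chain rule giving $\PKTs\psi = \eta\PKTs\psi_0 + \ABL\psi\nach{\psi_{max}}\PKTs\psi_{max}$, substitution of the basic model's \textsc{Clausius-Duhem} identity so that the bracketed term collapses to $\rhot\mathcal D_0$, and then the sign analysis via $\eta\ge 0$, $\mathcal D_0\ge 0$, $\PKTs\psi_{max}\ge 0$, and the interchange of differentiation and integration to conclude $\ABL\psi\nach{\psi_{max}}\le 0$. Your additional remark that this last sign conclusion tacitly requires $\psi_0\ge 0$ (so that $\int_0^{\psi_0}$ runs over an interval of non-negative length) identifies an assumption the paper leaves implicit via its normalization $\psi_0\von{\C=\I}=0$, but it does not alter the argument.
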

\begin{proof}
The time derivative of the free energy of the softening model is
 \begin{equation}
\PKTs\psi = \ABL \psi\nach{\psi_0} \PKTs\psi_0 + \ABL\psi\nach{\psi_{max}}\PKTs{\psi_{max}}
 = \eta \PKTs\psi_0 + \ABL\psi\nach{\psi_{max}}\PKTs{\psi_{max}}\ppp \label{eq:CDU_univ}
\end{equation}
Rearranging the definition \eqref{eq:CDU_base} of the dissipation for the basic material model, we obtain the time derivative of the free energy $\psi_0$ of the basic material model:
\begin{equation}
\rhot\PKTs\psi_0 = \frac{1}{2}\Ttilde_0 \ppkt \Ceck  - \rhot s_0\PKTs\theta  - \rhot\mathcal D_0\ppp \label{eq:CDU_base_psi0}
\end{equation}
Inserting \eqref{eq:CDU_base_psi0} and \eqref{eq:CDU_univ} into the definition of the dissipation (cf. \eqref{eq:CDU}) and rearranging terms yields
\begin{alignat}{1}
\rhot \mathcal D = &\left(\frac{1}{2}\Ttilde - \frac{1}{2}\eta\Ttilde_0\right)\ppkt \Ceck
- \rhot\left(s - \eta s_0\right)\PKTs\theta\label{eq:Univ_Ttilde}\\
&+ \eta \rhot \mathcal D_0 - \rhot \ABL\psi\nach{\psi_{max}}\PKTs{\psi_{max}}\label{eq:Univ_Diss}\ppp
\end{alignat}
The first two terms on the right-hand side of \eqref{eq:Univ_Ttilde} vanish due to \eqref{eq:ge_Ttilde} and \eqref{eq:ge_s}. Thus, to prove thermomechanical consistency, the only requirement is that remaining expression in line \eqref{eq:Univ_Diss} is non-negative. Due to the assumptions, the softening function $\eta$ and the dissipation $\mathcal D_0$ of the basic material model are always non-negative. Thus,
\begin{equation}
\mathcal D = \eta\mathcal D_0 - \ABL\psi\nach{\psi_{max}}\PKTs{\psi_{max}} \ge  - \ABL\psi\nach{\psi_{max}}\PKTs{\psi_{max}} \ppp
\end{equation}
From the definition of $\psi_{max}$ the time derivative of the history variable $\PKTs\psi_{max}$ is always non-negative. Thus, it is sufficient to show that the derivative of the free energy with respect to $\psi_{max}$ is negative. As the softening function is sufficiently smooth, integration and differentiation can be interchanged. From the monotonicity requirement of the softening function and the monotonicity property of the integral it follows that
\begin{equation}
\ABL\psi\nach{\psi_{max}} = \ABL\nach{\psi_{max}}\int\limits_0^{\psi_0}{\eta\von{\xi, \psi_{max}}}\mathrm d \xi
= \int\limits_0^{\psi_0}{\ABL\eta\von{\xi, \psi_{max}}\nach{\psi_{max}}}\mathrm d \xi \le 0 \ppp
\end{equation}
Thus, the overall softening model provides a non-negative dissipation and is therefore thermomechanically consistent.
\end{proof}

\subsection{Example: Thermohyperelasticity}
Using Theorem \ref{thm:ge}, a thermohyperelastic material model may be used as a basic material model. Thermohyperelasticity is deduced from a suitable free energy function $\psi_0\von{\C, \theta}$ (cf. \cite{Haupt1999}):
\begin{equation}
 \Ttilde_0 = 2 \ABL {\rhot\psi_0\von{\C, \theta}}\nach\C, \,\, s_0 = -\ABL\psi_0\von{\C, \theta}\nach\theta, \,\, \mathcal D = 0 \ppp
\end{equation}
Using the free energy \eqref{eq:ge_psi}, the thermomechanical generalization of the pseudo-elastic material models is as follows:
\begin{equation}
 \Ttilde = 2 \eta \ABL\rhot\psi_0\nach\C, \,\, s = -\eta \ABL\psi_0\nach\theta, \,\, \mathcal{D} = \int\limits_0^{\psi_{0}} \ABL\eta\von{\xi, \psi_{max}}\nach{\psi_{max}} \mathrm{d}\xi \, \PKTs\psi_{max} \label{eq:ge_thermoelas} \ppp
\end{equation}
The first equation in \eqref{eq:ge_thermoelas} is the known stress formula of pseudo-elastic models. The second relation is the conditional equation for the entropy $s$, which has to be specified when thermomechanical processes are considered. The dissipation in the third equation is always non-negative under the given restrictions on the softening function $\eta$ (see Theorem \ref{thm:ge}).

\subsection{Example: Viscoelasticity}
To show that even inelastic material models are admissible as a basic material model, the following (isothermal) viscoelastic model of a \textsc{Maxwell} body is investigated:
\begin{alignat}{1}
 \ECK\Ttilde_0 &= G \Ceck - \frac{1}{\tau_0} \Ttilde_0, \quad  \Ttilde_0\von{t = 0} = \Tenz 0,\label{eq:VE_Ttilde0eck}\\
 \rhot\psi_0 &= \frac{1}{4 G} \Ttilde_0 \ppkt \Ttilde_0, \quad \rhot\mathcal{D} = \frac{1}{2 G \tau_0}\Ttilde_0 \ppkt \Ttilde_0 \label{eq:VE_psi0}\ppp
\end{alignat}
Here, the modulus $G$ and the relaxation time $\tau_0$ are material parameters. It can easily be seen that this model is thermomechanically consistent by inserting the time derivative of $\psi_0$ and the evolution equation \eqref{eq:VE_Ttilde0eck} into the \textsc{Clausius-Duhem} inequality \eqref{eq:CDUisot}.\footnote{Using the terminology of \cite{Shutov2014}, this model is not weakly invariant under the reference change.}\\
The corresponding softening model is defined by equations \eqref{eq:OR_etaerf}, \eqref{eq:ge_psi}, \eqref{eq:ge_Ttilde}, and \eqref{eq:ge_D}:
\begin{alignat}{1}
 \eta &= 1 - \frac{1}{r}\operatorname{erf}\left[\frac{1}{m} \left(\psi_{max} - \psi_0  \right) \right]\label{eq:VE_eta},\\
 \psi\von{\psi_0, \psi_{max}} &= \int\limits_0^{\psi_0} \eta\von{\xi, \psi_{max}} \mathrm{d}\xi \label{eq:VE_psi},\\
 \Ttilde &= \eta\Ttilde_0,\label{eq:VE_Ttilde}\\
 \rhot\mathcal{D} &= \frac{\eta}{2 G \tau_0}\Ttilde_0 \ppkt \Ttilde_0 - \int\limits_0^{\psi_0}{\ABL\eta\von{\xi, \psi_{max}}\nach{\psi_{max}}}\mathrm d \xi \PKTs\psi_{max}\label{eq:VE_D}\ppp
\end{alignat}
To visualize the behavior of the resulting model, a two-sided shear test with increasing shear strain is performed. The corresponding deformation gradient is given by \eqref{eq:shear}. The following temporal course of the shear $k$ is used (see also figure \ref{fig:VE_kt}):
\begin{equation}
 k\von t = \hat k \left( 1 - \exp\left[-\frac{t}{\tau} \right]\right) \sin\left(\omega t \right) \label{eq:VE_kt} \ppp
\end{equation}
\begin{figure}[h]
	\centering
		\includegraphics[width=0.4\textwidth]{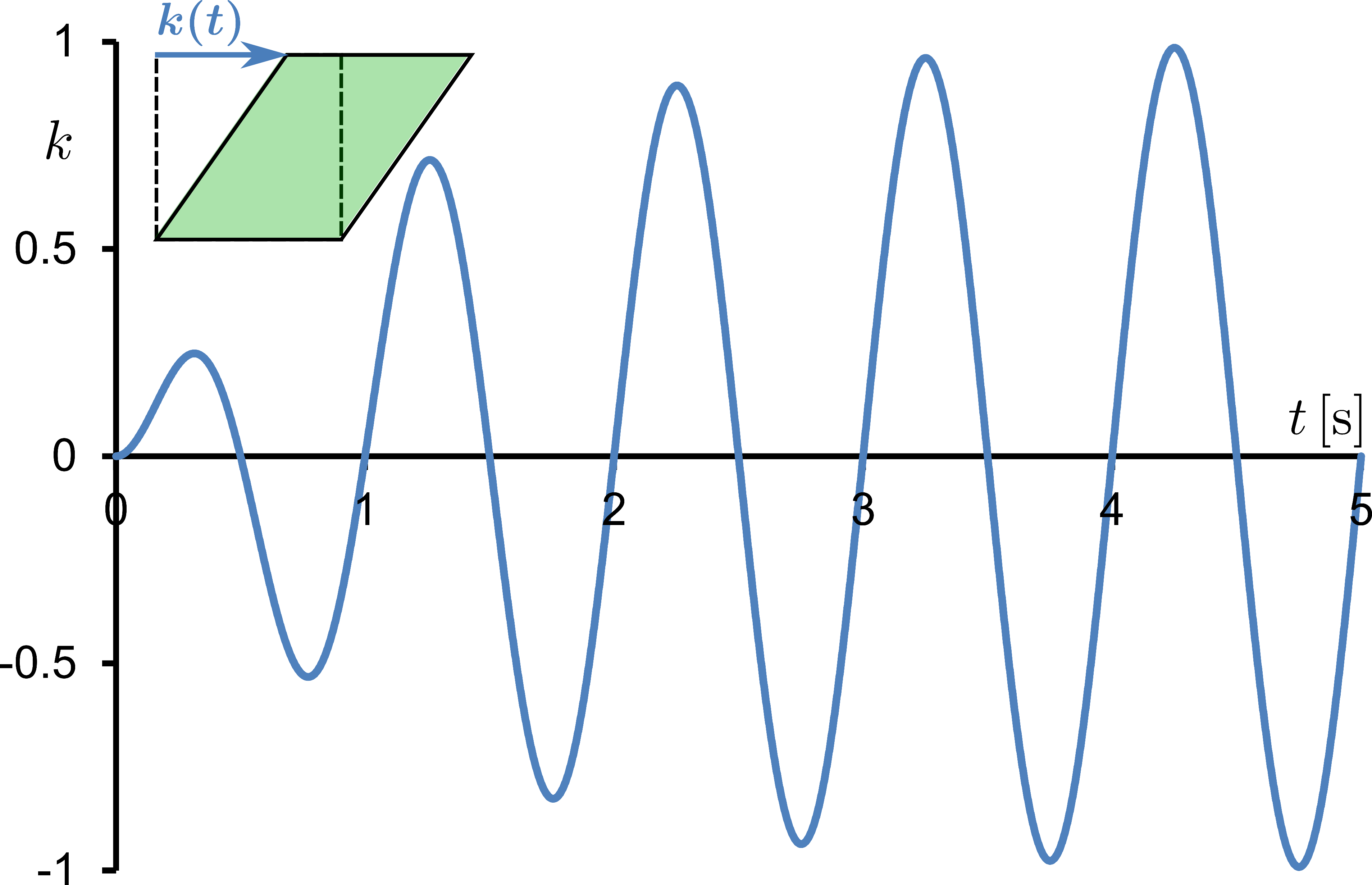}
	\caption{Temporal course of the shear strain $k\von t$ which is used to test the behavior of the viscoelastic softening model.}
	\label{fig:VE_kt}
\end{figure}
The parameters $\hat k = 1$, $\tau = 1 \text{s}$, and $\omega = 2 \pi \text{s}^{-1}$ are used for the shear cycle. Moreover, the material parameters are set to $G = 1 \text{MPa}$, $\tau_0 = 1\text{s}$, $r = 1$, and $m = 1 \text{MPa}$. The resulting behavior of the softening model is plotted in figure \ref{fig:VE_T1}.
\begin{figure}[h]
	\centering
		\includegraphics[width=1.0\textwidth]{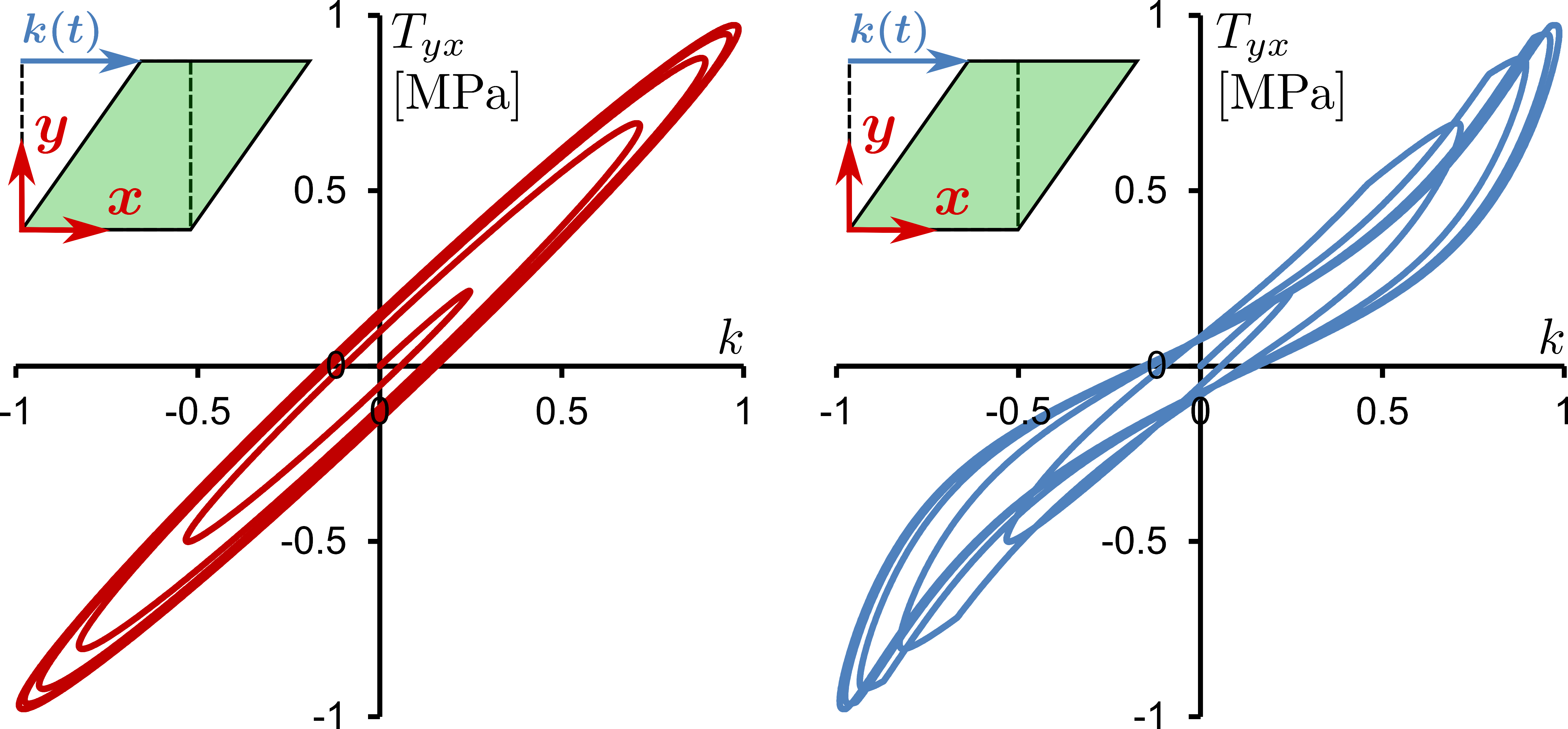}
	\caption{Plot of the first \textsc{Piola-Kirchhoff} stress $T_{yx}$ for a two-sided shear test with increasing amplitude ($k\von t$ from equation \eqref{eq:VE_kt}); Left: Stress-strain behavior of the viscoelastic basic material model; Right: Stress-strain curve of the corresponding softening model.}
	\label{fig:VE_T1}
\end{figure}
There, the behavior of the basic material model (equations \eqref{eq:VE_Ttilde0eck} and \eqref{eq:VE_psi0}) is compared to the softening model (equations \eqref{eq:VE_eta} - \eqref{eq:VE_D}). Obviously, the softening model is able to represent the softening effect depending on the maximum load which occurred in the history of the material. For points where the former maximum stress (and thus the free energy) in the history of the material is exceeded the stress of the softening model is equal to the stress of the basic viscoelastic model. If the current stress is below the maximum stress in the history, the softening function $\eta$ becomes active and a pronounced softening is observed.

\section{Conclusion}
In this paper, thermodynamical properties of pseudo-elastic models to represent the \textsc{Mullins} effect are investigated. The analysis of two established models of pseudo-elasticity provides an insight into critical points of currently used approaches. To overcome these difficulties, an alternative approach is proposed which enables the deviation of new pseudo-elastic models. Toward that end, a suitable free energy is specified, which yields conditional equations for the stress tensor and the dissipation after exploiting the \textsc{Clausius-Duhem} inequality. To bypass the limitation that only hyperelastic models are admissible as basic material models the concept of pseudo-elasticity is generalized. The proposed approach enables the extension of arbitrary thermomechanical, inelastic material models to allow for softening effects. Under natural assumptions on the softening function, the thermomechanical consistency is shown. Different examples demonstrate the applicability of this approach.
%% The Appendices part is started with the command \appendix;
%% appendix sections are then done as normal sections

\appendix

\section{Integrability}
\label{sec:a_int}
In the following appendix, the path dependence of the free energy $\psi_{EB}$ from equation \eqref{eq:EB_psi} is analyzed.
A necessary condition that the integral in \eqref{eq:EB_psi} is independent of the deformation path is that the derivative of the second \textsc{Piola-Kirchhoff} stress tensor with respect to the right \textsc{Cauchy-Green} tensor is symmetric, i.e. for all symmetric tensors $\X$ and $\Y$ of second rank the relation
\begin{equation}
 \X \ppkt \ABL \Ttilde \nach \C \ppkt \Y =  \Y \ppkt \ABL \Ttilde \nach \C \ppkt \X \label{eq:cond_symm_app}
\end{equation}
is fulfilled. In case of the model of \cite{Zuniga2002} it can be shown that the stiffness is in general not symmetric, i.e. the free energy depends on the path of the deformation. In this section this is shown for the Neo-Hookean model as a basic material model. There, the basic second \textsc{Piola-Kirchhoff} stress is given as
\begin{equation}
 \Ttilde_0 = 2 C_{10} \left( \CG \right)'\pkt \inv\C \ppp
\end{equation}
Using the derivative of the norm $m$ of the isochoric right \textsc{Cauchy-Green} tensor
\begin{equation}
 \ABL m \nach\C = I_3^{-\frac{1}{3}} m \left(\CG - \frac{m^2}{3}\inv\CG  \right)
\end{equation}
and the chain rule yields the following stiffness:
\begin{equation}
 \ABL\Ttilde\nach\C = \eta \ABL\Ttilde_0\nach\C + \Ttilde_0 \circ \ABL \eta\nach m \ABL m \nach \C = \eta  \ABL\Ttilde_0\nach\C + \ABL\eta\nach m \frac{2 C_{10}}{m} I_3^{-\frac{2}{3}}\left( \left( \CG \right)'\pkt \inv\CG \right) \circ \left(\CG - \frac{m^2}{3}\inv\CG  \right) \ppp
\end{equation}
The first part of this stiffness is obviously symmetric as $\Ttilde_0$ is derived from the free energy function $\psi_0$. In contrast, the second part is unsymmetric. To show this it is sufficient to examine the important part $\Tenv M$:
\begin{equation}
 \Tenv M := \left( \left( \CG \right)'\pkt \inv\C \right) \circ \left(\CG - \frac{m^2}{3}\inv\CG  \right) \ppp
\end{equation}

For the two special tensors $\X = \inv\CG$ and $\Y = \I$ the relation \eqref{eq:cond_symm_app} is violated. After some basic algebra it can be shown that
\begin{equation}
 \alpha := \inv\CG \ppkt \Tenv M \ppkt \I = \left[ I_2 - \frac{I_1}{3}\left(I_2^2 - 2 I_1 \right) \right] \left[ I_1 - \frac{I_2}{3}\left(I_1^2 - 2 I_2 \right) \right] \label{eq:Int_alpha}
\end{equation}
and
\begin{equation}
 \beta = \I \ppkt \Tenv M \ppkt \inv\CG = \left[ 3 - \frac{I_1}{3}I_2\right] \left[ 3 - \frac{1}{3}\left(I_1^2 - 2 I_2 \right)\left(I_2^2 - 2 I_1 \right) \right] \ppp \label{eq:Int_beta}
\end{equation}
In equations \eqref{eq:Int_alpha} and \eqref{eq:Int_beta} the following abbreviations were used:
\begin{equation}
 I_1 := I_1\von\CG = \CG \ppkt \I, \,\,\, I_2 := I_2\von\CG = \frac{1}{2}\left( I_1^2 - \CG \ppkt \CG \right) \ppp
\end{equation}
If the stiffness tensor was symmetric then $\alpha - \beta = 0$. In contrast,
\begin{equation}
 \alpha - \beta = \frac{1}{3}\left[-4I_1^3 - 4 I_2^3 + I_1^2I_2^2 + 18 I_1 I_2 - 27\right] \ppp
\end{equation}
In general, this expression is not equal to zero.

\section{Free energies for proposed softening functions}
\label{sec:a_psiOR}
In the following appendix it is shown that the softening functions proposed in \cite{Ogden1999} and \cite{Dorfmann2004} lead to a thermomechanically consistent material model.\\
For the softening function $\eta$ used in \cite{Ogden1999}
\begin{equation}
 \eta = 1 - \frac{1}{r}\operatorname{erf}{\left( \frac{\psi_{max} - \psi_0}{m} \right)}
\end{equation}
the related free energy $\psi$ of the pseudo-elastic model can be computed by integration of $\eta$ (equation \eqref{eq:PE_psi}):
\begin{alignat}{1}
 \psi =& \frac{m}{\sqrt\pi r} \exp{\left( -\frac{\left[\psi_{max} - \psi_0\right]^2}{m^2} \right)}
 + \frac{\psi_{max} - \psi_0}{r} \operatorname{erf}{\left(\frac{\psi_{max} - \psi_0}{m} \right)} + \psi_0 \\
&- \frac{m}{\sqrt\pi r} \exp{\left( -\frac{\psi_{max}^2}{m^2} \right)} - \frac{\psi_{max}}{r} \operatorname{erf}{\left(\frac{\psi_{max}}{m} \right)} \ppp
\end{alignat}
By differentiation, the thermomechanical consistency can be shown:
\begin{equation}
 \mathcal D = -\ABL \psi\von{\psi_0, \psi_{max}}\nach{\psi_{max}}\PKTs\psi_{max} = \frac{1}{r}\operatorname{erf}{\left(\frac{\psi_{max}}{m} \right)}\PKTs\psi_{max} \ge 0 \quad \text{for } r>0, m>0 \ppp\label{eq:A_D_eta1}
\end{equation}
This expression for $\mathcal{D}$ is equal to the quantity, \cite{Ogden1999} denote as ``dissipation rate''.\\
The free energy related to the softening function
\begin{equation}
 \eta =  1 - \frac{1}{r}\tanh{\left( \frac{\psi_{max} - \psi_0}{m} \right)}
\end{equation}
used in \cite{Dorfmann2004} can be given by
\begin{equation}
 \psi = \frac{m}{r}\ln{\left( \cosh{\left[ \frac{\psi_{max} - \psi_0}{m}\right]} \right)} + \psi_0 - \frac{m}{r}\ln{\left( \cosh{\left[ \frac{\psi_{max}}{m}\right]} \right)}  \ppp
\end{equation}
The thermomechanical consistency again follows from differentiation:
\begin{equation}
 \mathcal D = -\ABL \psi\von{\psi_0, \psi_{max}}\nach{\psi_{max}}\PKTs\psi_{max} = \frac{1}{r}\tanh{\left(\frac{\psi_{max}}{m} \right)}\PKTs\psi_{max} \ge 0 \quad \text{for }r>0, m>0 \ppp \label{eq:A_D_eta2}
\end{equation}
In the derivation of equations \eqref{eq:A_D_eta1} and \eqref{eq:A_D_eta2} it was used that $\PKTs\psi_{max} \ne 0$ only when $\psi_0 = \psi_{max}$. Hence,
\begin{equation}
D = -\ABL\psi\von{\psi_0, \psi_{max}}\nach{\psi_{max}} \PKTs\psi_{max} = - \int\limits_0^{\psi_{0}} \ABL\eta\von{\xi, \psi_{max}}\nach{\psi_{max}} \mathrm{d}\xi \, \PKTs\psi_{max} = - \int\limits_0^{\psi_{max}} \ABL\eta\von{\xi, \psi_{max}}\nach{\psi_{max}} \mathrm{d}\xi \, \PKTs\psi_{max} \ppp
\end{equation}

\bibliographystyle{elsarticle-harv}
\bibliography{Literatur}

%% Authors are advised to submit their bibtex database files. They are
%% requested to list a bibtex style file in the manuscript if they do
%% not want to use elsarticle-harv.bst.

%% References without bibTeX database:

% \begin{thebibliography}{00}

%% \bibitem must have one of the following forms:
%%   \bibitem[Jones et al.(1990)]{key}...
%%   \bibitem[Jones et al.(1990)Jones, Baker, and Williams]{key}...
%%   \bibitem[Jones et al., 1990]{key}...
%%   \bibitem[\protect\citeauthoryear{Jones, Baker, and Williams}{Jones
%%       et al.}{1990}]{key}...
%%   \bibitem[\protect\citeauthoryear{Jones et al.}{1990}]{key}...
%%   \bibitem[\protect\astroncite{Jones et al.}{1990}]{key}...
%%   \bibitem[\protect\citename{Jones et al., }1990]{key}...
%%   \harvarditem[Jones et al.]{Jones, Baker, and Williams}{1990}{key}...
%%

% \bibitem[ ()]{}

% \end{thebibliography}

\end{document}

%%
%% End of file `elsarticle-template-harv.tex'.